
\documentclass[letterpaper, 10 pt, conference]{ieeeconf}  
\usepackage{amsmath}
\usepackage{amsthm}
\usepackage{amssymb}
\usepackage[utf8]{inputenc}
\usepackage[T1]{fontenc}
\usepackage{graphicx}
\usepackage{enumerate}
\usepackage[backend=biber,style=ieee,citestyle=ieee]{biblatex}
\usepackage{xcolor}
\usepackage{comment}
\usepackage[ruled,vlined]{algorithm2e}
\usepackage[caption=false]{subfig}
\usepackage{caption}
\usepackage[font=footnotesize,skip=2pt]{caption}

\AtBeginBibliography{\footnotesize}  
\addbibresource{SafeOptimalCitations.bib} 

\newtheorem{theorem}{Theorem}
\newtheorem{proposition}[theorem]{Proposition}

\theoremstyle{definition}
\newtheorem{definition}{Definition}

\IEEEoverridecommandlockouts                              

\overrideIEEEmargins                                      

\pdfminorversion=4



\title{\LARGE \bf HJB Based Optimal Safe Control Using Control Barrier Functions}



\author{Hassan Almubarak$^{1,4}$, Evangelos A. Theodorou$^{2}$ and Nader Sadegh$^{3}$ 
\thanks{$^{1}$School of Electrical and Computer Engineering} 
\thanks{$^{2}$The Daniel Guggenheim School of Aerospace Engineering}
\thanks{$^{3}$The George W. Woodruff School of Mechanical Engineering}
\thanks{Georgia Institute of Technology, Atlanta, GA, USA}
\thanks{$^{4}$ Department of Control and Instrumentation Engineering}
\thanks{King Fahd University of Petroleum \& Minerals, Dhahran, Saudi Arabia}
\thanks{{\tt\footnotesize halmubarak, evangelos.theodorou,sadegh@gatech.edu}}      }
        
\begin{document}

\maketitle
\thispagestyle{empty}
\pagestyle{empty}

\begin{abstract}
This work proposes an optimal safe controller minimizing an infinite horizon cost functional subject to control barrier functions (CBFs) safety conditions. The constrained optimal control problem is reformulated as a minimization problem of the Hamilton-Jacobi-Bellman (HJB) equation subjected to the safety constraints. By solving the optimization problem, we are able to construct a closed form solution that satisfies optimality and safety conditions. The proposed solution is shown to be continuous and thus it renders the safe set forward invariant while minimizing the given cost. Hence, optimal stabilizability and safety objectives are achieved simultaneously. To synthesize the optimal safe controller, we present a modified Galerkin successive approximation approach which guarantees an optimal safe solution given a stabilizing safe initialization. The proposed algorithm is implemented on a constrained nonlinear system to show its efficacy.
\end{abstract}

\vspace{-.55mm}
\section{Introduction}
Safety, in its various forms and definitions, must be considered and, in many cases, prioritized to avoid costly damages. 
In safety critical control systems, however, potentially contradicting control requirements are likely to arise. Moreover, the need of effectively and efficiently controlling systems while satisfying safety conditions must be recognized. Optimal control has been a key approach to solve control theory problems while addressing possibly conflicting optimization objectives and requirements. Through dynamic programming arguments, the optimal control problem can be solved using the well known Hamilton-Jacobi-Bellman (HJB) equation. 
The main objective of this paper is to design optimal controllers that achieve safety and performance objectives simultaneously through solving constrained HJB equations. 

In dynamic systems theory, safety can be represented by invariance of the set of permitted states in the state space and this set is referred to as the safe set. Proving invariance of the safe set means that the system's states never leave the safe set and hence safety is guaranteed \cite{blanchini1999set}. Influenced by barrier methods used in optimization to approximately convert constrained optimization problems into unconstrained ones \cite{nesterov1994interior}, barrier certificates were presented in the control literature to prove sets' invariance and verify safety \cite{prajna2003barrier, prajna2004safety}. Performing similar arguments to those of control Lyapunov functions (CLFs) for control systems, resulted in introducing control barrier functions (CBFs) in \cite{wieland2007constructive} which were further developed in \cite{ames2014control,romdlony2014uniting,ames2016CBF-forSaferyCritControl,romdlony2016stabilization}. Recently, CBFs have become noted tools to render sets invariant and enforce safety. 

CBFs are popularly unified with CLFs in multi-objective control tasks \cite{wieland2007constructive,romdlony2016stabilization,ames2016CBF-forSaferyCritControl,agrawal2017discrete,taylor2020adaptive}. Exploiting Lyapunov arguments, CLFs are capable of rendering equilibrium points of interest, e.g. the origin, of nonlinear control systems asymptotically stable. Nonetheless, there is no formal method to find CLFs for general nonlinear systems and a unique control law may not be found \cite{primbs1999nonlinear}. Furthermore, in the unification of CLFs and CBFs, the associated stability and safety conditions may conflict and one needs to relax one of the conditions to guarantee feasibility of the solution. On the other hand, 
solving the HJB equation can provide a unique feedback controller that satisfies prespecified constraints and performance objectives, giving more flexibility to the control designer in multi-objective control tasks. Therefore, in this paper, we utilize the concept of control barrier functions in the context of optimal control to enforce safety and meet performance objectives through the minimization of an infinite horizon cost functional. It must be noted, however, that those advantages are associated with a difficult and potentially computationally demanding problem.

Computing optimal safe controllers through CBFs has been considered recently in the literature \cite{chen2020optimal, cohen2020approximate}. Motivated by minimizing the cumulative intervention of the CBF overtime, 
optimal safe control is computed using the duality relationship between the value function and the density function in \cite{chen2020optimal}. The proposed primal-dual algorithm solves the constrained optimal control problem iteratively by solving a perturbed HJB equation then using the solution to estimate the density function and then the perturbation is Clearlyupdated based on the estimation of the density function until satisfaction of Karush–Kuhn–Tucker (KKT) conditions. Clearly, the proposed algorithm is effective but adds to the complexity of the HJB equation's solution. Moreover, the dual optimization problem needs to be feasible in order to have no duality gap and if the CBF is estimated numerically, the duality gap could be large. To avoid the complexity of \cite{chen2020optimal}, \cite{cohen2020approximate} used approximate dynamic programming (ADP) to unify optimality and safety. Nonetheless, the approach taken to enforce safety is considerably different in that it enjoys adding a scheduled barrier function to the cost to approximate the problem rather than directly enforcing the CBF condition, a standard method known as the penalty method in the literature. Additionally, continuity of the proposed ADP based safe control was not provided which is needed for rendering the safe set forward invariant. 
It is not clear how adding the arbitrarily scheduled barrier function affects the optimal solution near the boundaries especially that there appears to be a sharp bouncy control action shown in the given two dimensional integrator example which resulted in a poor reaction in the states trajectory near the unsafe region. Furthermore, to enforce the required assumptions, one may need to use a large number of extrapolation functions which could affect the sought-after computational efficiency. It is worth noting that both methods do not provide a closed form solution.  

In this paper, CBFs preliminaries are provided in Section \ref{PRELIMINARIES}. Then, we introduce the optimal safe control problem statement in Section \ref{problem statement} in which an optimal safe controller minimizing a prespecified cost functional subject to safety constraints for systems with a high relative degree is sought-after. Using Karush–Kuhn–Tucker (KKT) optimality conditions, a closed form solution is constructed which is then utilized to form a new HJB equation used to generate the optimal safe solution in Section \ref{HJB BASED OPTIMAL SAFE CONTROL}. As a result, the CBF condition is directly enforced, with no approximations, and the new HJB equation can be directly solved using existing methods with mild modifications. Furthermore, we show that the proposed safe controller is continuous and thus it belongs to the set of controls that render the safe set forward invariant and is the one that minimizes the given cost functional achieving stabilizability and safety requirements. Moreover, in the absence of input constraints, we show when the optimal safe solution is the pointwise minimum norm controller that minimizes the CBF intervention to the unconstrained optimal controller and thus we provide a holistic discussion of the CBF constrained HJB equation's solutions. In Section \ref{OPTIMAL SAFE CONTROL VIA SAFE GALERKIN SUCCESSIVE APPROXIMATION (SGSA)}, we present a safe Galerkin successive approximation algorithm to synthesize the optimal safe control. To show the efficacy of the proposed algorithm,  we show improved performances over the popular min-norm CBF controller in a simulation example in Section \ref{ALGORITHM IMPLEMENTATION and EXAMPLES}. Lastly, conclusion remarks and future directions are discussed in Section \ref{CONCLUSIONS}. 

\section{Control Barrier Functions Preliminaries} \label{PRELIMINARIES}
In this paper, we use zeroing control barrier functions \cite{ames2016CBF-forSaferyCritControl} and thus we simply refer to them as control barrier functions (CBFs). In this section, we review CBFs notions and theorems used throughout the paper. 
\subsection{Control Barrier Functions}
Consider a continuously differentiable function $h: \mathbb{R}^n \rightarrow \mathbb{R}$ defining the superlevel set $\mathcal{C}$, i.e. $h$ is non-negative in $\mathcal{C}$, zero at the boundaries and strictly positive in the interior. Additionally, consider the dynamical control system
\begin{equation} \label{control dynamical system}
    \dot{x}(t)= f(x(t))+ g(x(t)) u(t)
\end{equation}
for $t\in \mathbb{R}, x \in \mathcal{D} \subset \mathbb{R}^n$,\ $u \in \mathcal{U} \subset \mathbb{R}^m$, and locally Lipchitz $f: \mathbb{R}^n \rightarrow \mathbb{R}^n$ and $g: \mathbb{R}^n  \rightarrow \mathbb{R}^{n\times m}$. 
\begin{definition}\label{CBF definition}
A continuously differentiable function $h: \mathbb{R}^n \rightarrow \mathbb{R}$ is a CBF for the superlevel set $\mathcal{C} \subseteq \mathcal{D} \subset \mathbb{R}^n$ defined above for the control system \eqref{control dynamical system}, if there exists a class $\mathcal{K}$ function $\alpha$, i.e. a continuous strictly increasing function $\alpha(-a,b) \rightarrow (-\infty,\infty)$ with $\alpha(0)=0$ for $a,b \in \mathbb{R}^+$, such that $\forall x \in \mathcal{C}$,
\begin{equation} \label{CBF condition}
    \sup_{u \in \mathcal{U}} [L_f h(x(t)) + L_g h(x(t)) u(t) + \alpha (h(x(t)))] \geq 0
\end{equation}
\end{definition}
\begin{theorem}[\cite{ames2019CBF-Theo&App}]
Consider the superlevel set $\mathcal{C} \subset \mathbb{R}^n$ defined by the continuously differentiable function $h : \mathcal{D} \subset \mathbb{R}^n \rightarrow \mathbb{R}$. If $\frac{\partial h}{\partial x} \neq 0\ \forall x \in \partial \mathcal{C}$ and $h$ is a CBF on $\mathcal{D}$, then a Lipschitz continuous controller $u(x(t))=K(x(t))$ for \eqref{control dynamical system} that satisfies \eqref{CBF condition} renders the set $\mathcal{C}$ forward invariant.
\end{theorem}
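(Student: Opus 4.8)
The plan is to reduce forward invariance of $\mathcal{C}$ to a scalar differential inequality for $h$ along the closed-loop trajectories and then to invoke the comparison lemma. First I would close the loop: substituting $u=K(x)$ into \eqref{control dynamical system} gives $\dot{x}=f(x)+g(x)K(x)=:f_{\mathrm{cl}}(x)$, which is locally Lipschitz since $f,g$ are locally Lipschitz and $K$ is Lipschitz continuous; hence for every initial condition $x(0)=x_0\in\mathcal{D}$ there is a unique solution $x(\cdot)$ on a maximal interval $[0,T_{\max})$. I would restrict attention to $x_0\in\mathcal{C}$ and show $x(t)\in\mathcal{C}$ for all $t$.

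Next, along this solution the map $t\mapsto h(x(t))$ is differentiable because $h\in C^1$, and
\begin{equation*}
\frac{d}{dt}h(x(t)) = L_f h(x(t)) + L_g h(x(t))K(x(t)) \;\geq\; -\alpha\bigl(h(x(t))\bigr),
\end{equation*}
which is exactly the CBF condition \eqref{CBF condition} evaluated at the chosen controller $u=K(x(t))$. I would then compare $h(x(t))$ with the scalar ODE $\dot{y}=-\alpha(y)$, $y(0)=h(x_0)\geq 0$. Since $\alpha(0)=0$, $y\equiv 0$ is an equilibrium, and because $\alpha$ is strictly increasing (hence negative on the negative axis), a solution starting at $y(0)\geq 0$ cannot cross into negative values: if it did there would be a first time at which $y=0$ and $y$ decreasing, contradicting $\dot y=-\alpha(0)=0$ and $\dot y>0$ just below zero. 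Thus $y(t)\geq 0$, and the comparison lemma yields $h(x(t))\geq y(t)\geq 0$ on $[0,T_{\max})$. A short separate step rules out finite escape while $x(t)\in\mathcal{C}$: on the relevant (compact, after restricting to a sublevel set if needed) region $f_{\mathrm{cl}}$ is bounded, so the solution extends and the bound $h(x(t))\geq 0$ persists on $[0,\infty)$.

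Finally I would translate $h(x(t))\geq 0$ back into $x(t)\in\mathcal{C}$. Here the hypothesis $\partial h/\partial x\neq 0$ on $\partial\mathcal{C}$ is used: it makes $0$ a regular value of $h$ near $\partial\mathcal{C}$, so that $\partial\mathcal{C}=\{x:h(x)=0\}$ and $\mathcal{C}=\{x:h(x)\geq 0\}$ has no lower-dimensional spurious pieces, i.e. $\mathcal{C}$ equals the closure of its interior; consequently $h(x(t))\geq 0 \iff x(t)\in\mathcal{C}$, which is precisely forward invariance. I expect the main obstacle to be the technical handling of the comparison step when $\alpha$ is merely class $\mathcal{K}$ and need not be locally Lipschitz at the origin, so the comparison ODE may fail uniqueness; the remedy is to argue via the sign of $\alpha$ and maximal solutions as above rather than an explicit solution formula, together with the no-finite-escape argument that keeps the closed-loop trajectory defined while it remains in $\mathcal{C}$.
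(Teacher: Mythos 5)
This theorem is imported verbatim from the cited reference \cite{ames2019CBF-Theo&App}; the paper gives no proof of its own, so your proposal can only be compared with the standard argument in the CBF literature, and it is essentially that argument: close the loop, get the scalar differential inequality $\frac{d}{dt}h(x(t))\geq-\alpha(h(x(t)))$ from \eqref{CBF condition}, and conclude $h(x(t))\geq 0$ by comparison with $\dot y=-\alpha(y)$. Your reasoning is correct, with three remarks. First, the comparison-lemma detour is avoidable and slightly delicate (as you note, $\alpha$ need not be locally Lipschitz, so one must work with minimal solutions); the cleaner route is to run your sign argument directly on $t\mapsto h(x(t))$: if $h(x(t_1))<0$, take the last zero $t_0<t_1$ of $h\circ x$ before $t_1$; on $(t_0,t_1]$ one has $h<0$, hence $\frac{d}{dt}h(x(t))\geq-\alpha(h(x(t)))>0$, so $h(x(t_1))\geq h(x(t_0))=0$, a contradiction — no comparison ODE needed. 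Second, your no-finite-escape step is glossed: $\mathcal{C}$ need not be compact, so without an extra compactness or forward-completeness hypothesis the conclusion should be read on the maximal interval of existence (this is how the cited reference handles it as well); "restricting to a sublevel set" of $h$ does not by itself give compactness. Third, the final translation step is immediate here because the paper defines $\mathcal{C}$ as the zero-superlevel set of $h$ (nonnegative on $\mathcal{C}$, zero exactly on $\partial\mathcal{C}$, positive in the interior), so $h(x(t))\geq 0$ is by definition $x(t)\in\mathcal{C}$; the hypothesis $\frac{\partial h}{\partial x}\neq 0$ on $\partial\mathcal{C}$ is not really needed for bare forward invariance but serves, in the cited source, to make $\partial\mathcal{C}=\{h=0\}$ a regular level set and to obtain the stronger conclusion (asymptotic stability of $\mathcal{C}$), so your use of it is consistent though not essential.
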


\subsection{High Order Control Barrier Functions}
When the gradient of a CBF $h$ is orthogonal to the input matrix $g$, the Lie derivative $L_g h(x)=0$. This calls for the need of a high order CBF (HOCBF) \cite{nguyen2016exponential,xiao2019control,ames2019CBF-Theo&App}, generalized in \cite{xiao2019control}. Consider the  $k^\text{th}$ continuously differentiable function $h: \mathbb{R}^n \rightarrow R$ and let us have differentiable class $\mathcal{K}$ functions $\alpha_1, \alpha_2, \dots, \alpha_k$ with a series of functions
\begin{align} \begin{split} \label{safe sets functions}
    &\psi_0(x):= h(x) \\
    &\psi_1(x):= \dot{\psi}_0(x) + \alpha_1 (\psi_0(x)) \\
    & \ \ \ \ : \\
    &\psi_k(x):= \dot{\psi}_{k-1}(x) + \alpha_k (\psi_{k-1}(x)) \\
\end{split} \end{align}
Define the superlevel sets $\mathcal{C}_1, \mathcal{C}_2, \dots, \mathcal{C}_k$ associated with each function in \eqref{safe sets functions} such that 
\begin{align} \begin{split} \label{safe sets}
    &\mathcal{C}_i:= \{ x\in \mathbb{R}^n : \psi_{i-1}(x) \geq 0 \}, \ i=1,\dots,k
\end{split} \end{align}
Let the set $\mathbf{C}=\mathcal{C}_1 \cap \mathcal{C}_2 \cap \dots \cap \mathcal{C}_k$. Now, we define high order control barrier functions according to \cite{xiao2019control}.
\begin{definition}
For the control system \eqref{control dynamical system}, the superlevel sets $\mathcal{C}_1, \dots, \mathcal{C}_k$ and the associated functions $\psi_0, \dots, \psi_k$ defined above, the function $h: \mathbb{R}^n \rightarrow \mathbb{R}$ is called a high order control barrier function (HOCBF) of relative degree $k$ if it is $k^{\text{th}}$ differentiable over \eqref{control dynamical system} and there exists differentiable class $\mathcal{K}$ functions $\alpha_1, \dots, \alpha_k$ such that $\forall x\in \mathbf{C}$
\begin{equation} \label{HOCBF condition}
L_f^k h(x) + L_g L_f^{k-1} h(x)u  + \alpha(\psi_{k-1}) \geq 0  
\end{equation}
\end{definition}
\begin{theorem}[\cite{xiao2019control}]
For the control system \eqref{control dynamical system}, the HOCBF $h(x)$ defined above, and the safe sets \eqref{safe sets}, if $x(t_0) \in \mathbf{C}$, then a Lipschitz continuous controller that belongs to the set $K_{\text{cbf}}$ that consists of controls that satisfy the safety condition \eqref{HOCBF condition}:
\begin{align} \label{K_cbf set of controllers satisfy CBF condition}
    \begin{split}
        K_{\text{cbf}} (x) =& \{u \in U : L_f^k h(x(t)) + L_g L_f^{k-1} h(x(t))u \\ & + \alpha(\psi_{k-1}) \geq 0 \}
    \end{split}
\end{align}
$\forall t \geq t_0$, is safe, i.e. it renders the set $\mathbf{C}$ forward invariant.
\end{theorem}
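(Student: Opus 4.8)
The plan is to reduce forward invariance of $\mathbf{C}=\mathcal{C}_1\cap\dots\cap\mathcal{C}_k$ to a cascade of scalar differential inequalities along the closed-loop trajectory and then propagate nonnegativity backwards through \eqref{safe sets functions} by a comparison argument. By \eqref{safe sets} it suffices to show that the closed-loop solution $x(\cdot)$ with $x(t_0)\in\mathbf{C}$ satisfies $\psi_{i-1}(x(t))\ge 0$ for every $i=1,\dots,k$ and every $t\ge t_0$. Since $u=K(x)$ is Lipschitz continuous, the closed-loop field $f(x)+g(x)K(x)$ is locally Lipschitz, so a unique solution $x(\cdot)$ exists on a maximal interval $[t_0,T_{\max})$, along which each $t\mapsto\psi_i(x(t))$ is differentiable, the $\psi_i$ being the compositions in \eqref{safe sets functions} of the Lie derivatives of $h$ with the differentiable $\alpha_j$.

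The main tool is a comparison (Nagumo-type) lemma: if $y:[t_0,T)\to\mathbb{R}$ is $C^1$ with $y(t_0)\ge 0$ and $\dot y(t)\ge-\alpha(y(t))$ for a class $\mathcal{K}$ function $\alpha$, then $y(t)\ge 0$ on $[t_0,T)$. This follows from the comparison principle: $z\equiv 0$ solves $\dot z=-\alpha(z)$, $z(t_0)=0$ because $\alpha(0)=0$, and $y(t_0)\ge z(t_0)$ together with $\dot y\ge-\alpha(y)$ forces $y(t)\ge z(t)=0$.

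Next I would run a backward induction on $i$ from $k$ down to $1$ on the maximal interval $[t_0,T^\ast)\subseteq[t_0,T_{\max})$ on which $x(t)\in\mathbf{C}$ (nonempty since $\mathbf{C}$ is closed and $x(t_0)\in\mathbf{C}$). On this interval the controller lies in $K_{\text{cbf}}(x(t))$, so \eqref{HOCBF condition} holds along the trajectory; by the cascade \eqref{safe sets functions} this is precisely $\psi_k(x(t))=\dot\psi_{k-1}(x(t))+\alpha_k(\psi_{k-1}(x(t)))\ge 0$, hence $\dot\psi_{k-1}\ge-\alpha_k(\psi_{k-1})$. Since $x(t_0)\in\mathbf{C}\subseteq\mathcal{C}_k$ gives $\psi_{k-1}(x(t_0))\ge 0$, the comparison lemma yields $\psi_{k-1}(x(t))\ge 0$ on $[t_0,T^\ast)$. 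Feeding this into $\psi_{k-1}=\dot\psi_{k-2}+\alpha_{k-1}(\psi_{k-2})\ge 0$ gives $\dot\psi_{k-2}\ge-\alpha_{k-1}(\psi_{k-2})$ with $\psi_{k-2}(x(t_0))\ge 0$, so again $\psi_{k-2}(x(t))\ge 0$; iterating down \eqref{safe sets functions} produces $\psi_{i-1}(x(t))\ge 0$ for all $i$, in particular $h(x(t))=\psi_0(x(t))\ge 0$, on $[t_0,T^\ast)$, i.e.\ $x(t)\in\mathbf{C}$ throughout.

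Finally I would close the self-referential loop: the inequalities $\psi_{i-1}(x(t))\ge 0$ derived on $[t_0,T^\ast)$ persist at $t=T^\ast$ by continuity, so $x(T^\ast)\in\mathbf{C}$ whenever $T^\ast<T_{\max}$, contradicting maximality of $T^\ast$; hence $T^\ast=T_{\max}$ and $x(t)\in\mathbf{C}$ on the whole interval of existence, which is forward invariance of $\mathbf{C}$ (finite escape does not contradict invariance and is ruled out whenever $\mathbf{C}$ is compact or solutions are a priori bounded). I expect the main obstacle to be exactly this circularity: the HOCBF inequality \eqref{HOCBF condition} is only assumed on $\mathbf{C}$, so the comparison chain can only be invoked while the trajectory is already known to lie in $\mathbf{C}$, which is why the argument must be carried on the maximal interval $[t_0,T^\ast)$ and closed by continuity; a secondary technicality is the boundary case $x(t)\in\partial\mathbf{C}$, where the non-strict inequality in \eqref{HOCBF condition} supplies the needed Nagumo-type subtangentiality, together with making the comparison lemma rigorous for a merely continuous $\alpha$ via maximal solutions of $\dot z=-\alpha(z)$.
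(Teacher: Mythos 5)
The paper itself offers no proof of this theorem --- it is quoted from \cite{xiao2019control} --- and your cascade-plus-comparison argument is essentially the proof given in that reference: along the closed loop, \eqref{HOCBF condition} is exactly $\psi_k=\dot\psi_{k-1}+\alpha_k(\psi_{k-1})\ge 0$, and the zeroing-barrier comparison lemma is applied iteratively to get $\psi_{k-1}\ge 0$, then $\psi_{k-2}\ge 0$, and so on down to $h=\psi_0\ge 0$, i.e.\ forward invariance of each $\mathcal{C}_i$ and hence of $\mathbf{C}$. Your comparison lemma is sound even without Lipschitz continuity of $\alpha_i$: any solution of $\dot z=-\alpha(z)$ with $z(t_0)=0$ stays nonnegative (for $z<0$ the right-hand side is strictly positive), so the minimal-solution form of the comparison principle gives $y\ge 0$, as you anticipate. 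The only step that does not quite close as written is the final maximality contradiction: establishing $x(T^\ast)\in\mathbf{C}$ by continuity does not by itself contradict maximality of $T^\ast$ when $x(T^\ast)\in\partial\mathbf{C}$, because extending the comparison chain beyond $T^\ast$ would again require \eqref{HOCBF condition}, which you have only assumed while the state is in $\mathbf{C}$ --- precisely the circularity you flagged. The standard resolutions are either to read the hypothesis as the theorem states it (the controller satisfies the inequality along the closed-loop trajectory for all $t\ge t_0$, so $\dot\psi_{k-1}\ge-\alpha_k(\psi_{k-1})$ holds on all of $[t_0,T_{\max})$ and your cascade applies directly, with no maximal-interval device needed), or to assume the HOCBF inequality on an open set containing $\mathbf{C}$ (as the paper's first quoted CBF theorem does for $k=1$ by requiring $h$ to be a CBF on $\mathcal{D}$), which lets the differential inequality persist in a neighborhood of the boundary. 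Under either reading your argument is complete and coincides with the cited proof.
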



\section{Optimal Control Problem Statement} \label{problem statement}

Consider the optimal control problem 
\begin{equation} \label{Cost Functional}
    V(x_0,u)= \min_{u \in \mathcal{U}} \int_{0}^{\infty} \Big( \mathbf{Q}\big(x\big) + \mathbf{R}\big(u\big) \Big) dt 
\end{equation}
subject to the nonlinear control-affine system \eqref{control dynamical system} and the safety condition \eqref{HOCBF condition}, 
where $x_0=x(t_0)$, $\mathbf{Q}: \mathbb{R}^n \rightarrow \mathbb{R}^+ \ \forall x \neq 0$ is continuously differentiable and $\mathbf{R}:\mathbb{R}^m \rightarrow \mathbb{R}^+$ is continuously differentiable, even, $R=\frac{\partial^2 \mathbf{R}}{\partial u^2} \succ 0$ and there exists $\rho(u):=( \frac{\partial \mathbf{R}}{\partial u})^{\rm{T}}$ which has the inverse function $\phi(v):=\rho^{-1}(v)$ and $\rho(0)=\phi(0)=0$ \cite{SadeghAlmubarak_recursive}. It is assumed that $x(t_0) \in \text{Int}\mathbf{C}$ and that the origin, without loss of generality, is in the set $\mathbf{C}$. Furthermore, the functions involved in the safety constraint, $\dot{h}$, $\psi$'s and $\alpha$'s, are locally Lipschitz continuous. 

A leading methodology to tackle such an optimal control problem is solving the associated Hamilton-Jacobi-Bellman (HJB) equation. Under certain conditions and assumptions, the HJB equation has a unique, possibly smooth, solution and is a necessary and sufficient condition for the optimal control problem \cite{lewis2012optimal,lukes1969optimal}. In light of this, to achieve our goal, we turn the constrained optimal control problem \eqref{Cost Functional} subject to \eqref{control dynamical system} and \eqref{HOCBF condition} into an optimization problem that minimizes the generalized HJB (GHJB) equation 
\begin{equation} \label{GHJB}
    V_x^*\big(f(x)+g(x) u^*\big) + \mathbf{R}(u^*) + \mathbf{Q}(x) =0
\end{equation}
with a boundary condition $V^* (0)=0$ and an optimal controller $u^*$, where $V^*$ is the optimal solution 
and $V_x^* =\frac{\partial V^*}{\partial x}$. Therefore, the problem can be formulated as
\begin{align} \begin{split} \label{min HJB}
    & \ \ \ 0 = \min_{u\in \mathcal{U}} \Big\{ L_f V^*(x)+ L_g V^*(x) u + \mathbf{R}(u) + \mathbf{Q}(x)  \Big\} \\
     &\text{s.t.} \\
      & {\rm{C_s}}:= L_f^k h(x) + L_g L_f^{k-1} h(x)u  + \alpha(\psi_{k-1}(x)) \geq 0
\end{split} \end{align}

\section{HJB Based Optimal Safe Control} \label{HJB BASED OPTIMAL SAFE CONTROL}
Throughout the paper, $u^*_{\text{safe}}$ denotes the optimal \textit{safe} control, $V^*_{\text{safe}}$ denotes the corresponding value function that solves the constrained HJB equation \eqref{min HJB} and $u^*$ and $V^*$ denote the optimal control and the value function of the unconstrained optimal control problem respectively. 
Due to convexity of the objective function and the constraint, the optimal safe control that solves \eqref{min HJB} can be acquired through the Karush–Kuhn–Tucker (KKT) optimality conditions 
\begin{align} \begin{split} \label{KKT conditions}
         \frac{\partial \mathbf{R}}{\partial u}\Big|_{u^*_{\text{safe}}} + L_g V^*_{\text{safe}} & (x)  - \lambda^{\rm{T}}   L_g L_f^{k-1} h(x) =0 \\
        \lambda^{\rm{T}} \Big( L_f^k h(x) + L_g L_f^{k-1} & h(x)u^*_{\text{safe}}   + \alpha(\psi_{k-1}(x))\Big) =0 \\
       & \lambda \geq 0
\end{split} \end{align}
Hence, the optimal safe control is given by:
\begin{equation} \label{optimal safe control}
    u^*_{\text{safe}}= -\phi\Big(L_g V^*_{\text{safe}}(x) - \lambda^{\rm{T}} L_g L_f^{k-1} h(x) \Big)^{\rm{T}}
\end{equation}
It is worth mentioning that input constraints can be enforced through a proper choice of $\mathbf{R}$ and $\phi$ \cite{lyshevski1998optimal} as in \cite{SadeghAlmubarak_recursive} where $\phi$ was chosen to be the hyperbolic tangent function. For mathematical clarity and simplicity in our equations, however, we choose a quadratic cost in the input $\mathbf{R}(u)=\frac{1}{2}u^{\rm{T}} R u$ although the same exact analysis can be carried out with a general $\mathbf{R}$. Hence, the optimal safe control can be given as
\begin{equation} \label{quadratic optimal safe control}
    u^*_{\text{safe}}= -R^{-1} \Big(L_g V^*_{\text{safe}}(x) - \lambda^{\rm{T}} L_g L_f^{k-1} h(x) \Big)^{\rm{T}}
\end{equation}
where $\lambda(x) =$ 
\begin{equation} \label{lambda for qudratic R}
    -\frac{L_f^k h(x)-L_g L_f^{k-1} h(x) R^{-1}L_g V^*_{\text{safe}}(x)^{\rm{T}} +\alpha(\psi_{k-1}) }{L_g L_f^{k-1}h(x) R^{-1}L_gL_f^{k-1}h(x)^{\rm{T}}}
\end{equation}
if ${\rm{C_s}}<0$ and $0$ otherwise. Consequently, the associated HJB equation can be found to be
\begin{multline} \label{quadratic constrained_HJB}
L_f V^*_{\text{safe}}(x) - \frac{1}{2} L_g V^*_{\text{safe}}(x) R^{-1} L_g V^*_{\text{safe}}(x)^{\rm{T}}  + Q(x) \\ + \frac{1}{2} \lambda^{\rm{T}} L_g L_f^{k-1}g(x) R^{-1} L_g L_f^{k-1} h(x)^{\rm{T}}\lambda =0
\end{multline}
It should be noted that the carried development is for high order CBFs and thus simple first order CBF, i.e. the relative degree $k=1$, is a simple special case. One may solve this HJB equation \eqref{quadratic constrained_HJB} using the SGA method to compute $V^*_{\text{safe}}$ successively based on the current estimate of ${\rm{C_s}} (x,u^*_{\text{safe}})$. To utilize the well known solution of the unconstrained infinite horizon optimal control problem \cite{lewis2012optimal}, which coincides with \eqref{quadratic constrained_HJB} and \eqref{quadratic optimal safe control} with $\lambda=0$, we compute $V^*_{\text{safe}}$ when the constraint is active and use the unconstrained solution elsewhere. Now, let $\eta^{\rm{T}} =  L_g L_f^{k-1} h(x) R^{-1}$ and $H= \eta^{\rm{T}} R \eta$.
Then $\lambda$ can be written as 
\begin{equation} \label{simplified lambda}
\begin{cases}- H^{-1} \Big(L_f^k h(x)-\eta^{\rm{T}} L_g V^*_{\text{safe}}(x)^{\rm{T}} +\alpha(\psi_{k-1}) \Big) & , \ {\rm{C_s}}<0 \\
\ 0 & , \ {\rm{C_s}}\geq0 \end{cases}
\end{equation}
It must be noted that $H$ is always invertible for all $x \in \mathcal{D}$ since $\eta$ is nonzero, for a properly defined HOCBF. 
Now, substituting for $\lambda$ in the optimal safe controller \eqref{quadratic optimal safe control}, with further simplifications, the constrained HJB equation becomes 
\begin{align}
    \begin{split}
        2V^*_{\text{safe}} \Big(&f - g \eta H^{-1} \big(L_f^k h(x)+ \alpha(\psi_{k-1})\big) \Big) +2\mathbf{Q}(x)  \\
        - V^*_{\text{safe}} g\Big( R^{-1}  &-  \eta H^{-1} \eta^{\rm{T}}  \Big) g^{\rm{T}} V^{*\rm{T}}_{\text{safe}} + L_f^k h(x)^{\rm{T}} H^{-1} L_f^k h(x)\\
        + 2 L_f^k h(x)^{\rm{T}} & H^{-1} \alpha(\psi_{k-1}) + \alpha(\psi_{k-1})^{\rm{T}} H^{-1} \alpha(\psi_{k-1}) =0
    \end{split}
\end{align}
Defining 
\begin{equation} \label{new fbar,Rbar and Qbar}
\begin{split}
    &\bar{F}=f - g \eta H^{-1} \big(L_f^k h(x)+ \alpha(\psi_{k-1})\big)  \\ 
    &\bar{R}=R^{-1} -  \eta H^{-1} \eta^{\rm{T}}, \ \bar{g}= g \bar{R}^{\frac{1}{2}} \\
    &\bar{Q}=2\mathbf{Q}(x) + L_f^k h(x)^{\rm{T}} H^{-1} L_f^k h(x) \\ & \ \ \ \ + 2 L_f^k h(x)^{\rm{T}} H^{-1} \alpha(\psi_{k-1}) + \alpha(\psi_{k-1})^{\rm{T}} H^{-1} \alpha(\psi_{k-1}) \\
    & \ \ =  2\mathbf{Q}(x) + ||L_f^k h(x) + \alpha(\psi_{k-1}) ||^2_{H^{-1}}
    \end{split}
\end{equation}
gives
\begin{equation} \label{constrained HJB 2}
  L_{\bar{F}}V^*_{\text{safe}}(x) - \frac{1}{2} L_{\bar{g}} V^*_{\text{safe}}(x) L_{\bar{g}} V^*_{\text{safe}}(x)^{\rm{T}} + \bar{Q} = 0
\end{equation}
which, interestingly, looks similar to the original HJB equation but with modified dynamics and costs. The following Proposition shows that the matrix $\bar{R}$ is bounded and positive semi-definite and thus it has a unique positive semi-definite square root. Hence, $\bar{g}$ is well defined and \eqref{constrained HJB 2} is well posed. 
\begin{proposition} \label{Rbar Prop}
The matrix $\bar{R} = R^{-1} - \eta H^{-1} \eta^{\rm{T}}$ is bounded and is positive semi-definite for all $x \in \mathcal{D}$.
\end{proposition}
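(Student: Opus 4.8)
The plan is to exploit that $H=\eta^{\rm T}R\eta$ is a \emph{scalar}. Since $h$ is real-valued, $b^{\rm T}:=L_gL_f^{k-1}h(x)$ is a $1\times m$ row vector, so $\eta=R^{-1}b$ and $H=b^{\rm T}R^{-1}b$ is a strictly positive number wherever $\eta\neq 0$ --- which, as already noted in the text, holds on all of $\mathcal D$ for a properly defined HOCBF. With this, $\bar R$ is exactly a rank-one downdate of the fixed positive definite matrix $R^{-1}$:
\[
\bar R \;=\; R^{-1}\;-\;\frac{R^{-1}b\,b^{\rm T}R^{-1}}{b^{\rm T}R^{-1}b},
\]
and the proposition reduces to a statement about such a downdate.

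For positive semi-definiteness I would pass to the $R^{1/2}$-congruent form. Writing $R^{-1}=R^{-1/2}R^{-1/2}$ with $R^{-1/2}\succ0$ and setting $c:=R^{-1/2}b$ (so that $b^{\rm T}R^{-1}b=\|c\|^2$), one obtains
\[
R^{1/2}\,\bar R\,R^{1/2} \;=\; I-\frac{c\,c^{\rm T}}{\|c\|^2} \;=\; I-P_c,
\]
where $P_c$ is the orthogonal projector onto $\mathrm{span}(c)$. Since $0\preceq I-P_c\preceq I$, congruence by $R^{-1/2}$ yields the sandwich $0\preceq\bar R\preceq R^{-1}$; equivalently, for any $v$ one has $v^{\rm T}\bar Rv=v^{\rm T}R^{-1}v-(b^{\rm T}R^{-1}v)^2/(b^{\rm T}R^{-1}b)\ge0$ by the Cauchy--Schwarz inequality in the inner product $\langle x,y\rangle:=x^{\rm T}R^{-1}y$, and $v^{\rm T}(R^{-1}-\bar R)v=(b^{\rm T}R^{-1}v)^2/(b^{\rm T}R^{-1}b)\ge0$. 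Taking $v$ parallel to $b$ gives equality in the first estimate, confirming that $\bar R$ is in general only semi-definite, which is all the construction $\bar g=g\bar R^{1/2}$ requires.

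Boundedness then comes for free from $0\preceq\bar R\preceq R^{-1}$: every eigenvalue of $\bar R(x)$ lies in $[0,\lambda_{\max}(R^{-1})]$, and since $R$ is the constant matrix of the quadratic cost, $\|\bar R(x)\|_2\le\lambda_{\max}(R^{-1})<\infty$ uniformly in $x\in\mathcal D$. It is worth observing that the corrective term $R^{-1}b\,b^{\rm T}R^{-1}/(b^{\rm T}R^{-1}b)$ is homogeneous of degree zero in $b$, so it cannot blow up even at points where $\|L_gL_f^{k-1}h(x)\|$ is small --- the only requirement is $\eta(x)\neq0$. I do not expect a genuine obstacle here: the actual content is the one-line projection / Cauchy--Schwarz estimate, and the only point needing care is the $x$-dependence --- one must keep $\eta(x)\neq0$ throughout $\mathcal D$ (otherwise $\bar R$ is undefined there), and, were one to redo the argument for a general $\mathbf R$ with $u$-dependent $R$, a uniform bound on $R^{-1}$ over the relevant range of $u$ would additionally be needed for the boundedness claim.
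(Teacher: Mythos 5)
Your proof is correct and follows essentially the same route as the paper: with $c=R^{-1/2}b$ (the paper's $\bar\eta$), both arguments write $\bar R$ as the congruence $R^{-1/2}(I-P_c)R^{-1/2}$ of an orthogonal projector, whose eigenvalues in $\{0,1\}$ give positive semi-definiteness and boundedness. Your added Cauchy--Schwarz reformulation and the explicit eigenvalue bound $0\preceq\bar R\preceq R^{-1}$ are just more detailed versions of the same observation.
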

\begin{proof}
Let $R^{\frac{1}{2}}$ be the symmetric and positive definite square root of $R$ and $\bar{\eta}=R^{\frac{-1}{2}}L_g L_f^{k-1} h(x)$. Then, $H=\bar{\eta}^{\rm{T}}\bar{\eta}$ and $\bar{R} = R^{-\frac{1}{2}}P_{\bar{\eta}}  R^{-\frac{1}{2}}$ where
$P_{\bar{\eta}}=I - \bar{\eta} (\bar{\eta}^{\rm{T}}\bar{\eta})^{-1} \bar{\eta}^{\rm{T}} 
$
is the projection matrix onto the orthogonal complement of ${\rm{col}}({\bar{\eta}})$ implying that its eigenvalues are either one or zero since $P_{{\bar{\eta}}} {\bar{\eta}}=0$ and $P_{{\bar{\eta}}} v=v$, $v^T {\bar{\eta}}=0$. Therefore $\bar{R} = R^{-\frac{1}{2}}P_{\bar{\eta}} R^{-\frac{1}{2}}$ is positive semi-definite and bounded.
\end{proof}

In the following theorem, we establish the main result through the assumption of a smooth solution for the constrained optimal control problem. As mentioned earlier, under certain conditions and assumptions, a continuously differentiable, possibly smooth, value function can be shown to exist. For further discussions on this, the reader may refer to the literature, \cite{lewis2012optimal,al1961optimal,lukes1969optimal,SadeghAlmubarak_recursive} and the references therein.
\begin{theorem} \label{Main Theorem}
Consider the optimal control problem \eqref{min HJB}, satisfying the required conditions and assumptions associated with the control system \eqref{control dynamical system} and the CBF condition \eqref{HOCBF condition}. Assume that there exists a Lyapunov-like smooth solution $V_{\text{safe}}^*(x)$ that solves the optimal control problem \eqref{Cost Functional} subject to the dynamics \eqref{control dynamical system} and the safety condition \eqref{HOCBF condition}. As a consequence, the optimal safe controller $u^*_{\text{safe}}$ in \eqref{quadratic optimal safe control} belongs to the set $K_{\text{cbf}}$, i.e. it renders $\mathbf{C}$ forward invariant, is locally Lipchitz continuous and is the one, in $K_{\text{cbf}}$, that minimizes the cost functional \eqref{Cost Functional}. Furthermore, the origin of the closed loop system $\dot{x}=f(x)+g(x) u^*_{\text{safe}}$ is asymptotically stable. 
\end{theorem}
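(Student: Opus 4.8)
The plan is to verify the four assertions in order, reusing the piecewise construction of $u^*_{\text{safe}}$ and its multiplier set up in \eqref{quadratic optimal safe control}--\eqref{simplified lambda}.

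\emph{Step 1: $u^*_{\text{safe}}\in K_{\text{cbf}}$.} I would start from the KKT system \eqref{KKT conditions}. By complementary slackness there are two cases. If $\lambda(x)=0$, then by \eqref{lambda for qudratic R}--\eqref{simplified lambda} we are in the regime ${\rm C_s}\ge 0$ and $u^*_{\text{safe}}$ coincides with the unconstrained-form control, so \eqref{HOCBF condition} holds at $u^*_{\text{safe}}$. If $\lambda(x)>0$, complementary slackness forces the constraint to be active; indeed a short computation confirms ${\rm C_s}(x,u^*_{\text{safe}})={\rm C_s}(x,\tilde u)+H\lambda=0$ for the choice \eqref{simplified lambda}, where $\tilde u=-R^{-1}L_gV^*_{\text{safe}}(x)^{\rm T}$. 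Either way $u^*_{\text{safe}}(x)$ satisfies the HOCBF inequality for every $x\in\mathbf C$, so $u^*_{\text{safe}}(x)\in K_{\text{cbf}}(x)$.

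\emph{Step 2: local Lipschitz continuity and forward invariance (the main obstacle).} This is the delicate part, since $u^*_{\text{safe}}$ is defined piecewise across the switching surface $\{{\rm C_s}=0\}$. The key observation is that the numerator in \eqref{lambda for qudratic R}, i.e.\ $L_f^k h(x)-\eta^{\rm T}L_gV^*_{\text{safe}}(x)^{\rm T}+\alpha(\psi_{k-1})$, is precisely ${\rm C_s}(x,\tilde u)$, so that $\lambda$ admits the single global representation $\lambda(x)=\max\{0,\,-H^{-1}\,{\rm C_s}(x,\tilde u)\}$ on $\mathcal D$. I would then argue: (i) under the smoothness hypothesis on $V^*_{\text{safe}}$ and the local Lipschitzness of $f,g$ and of the functions entering the constraint ($L_f^k h$, $L_gL_f^{k-1}h$, $\psi_{k-1}$, $\alpha$), the map $x\mapsto {\rm C_s}(x,\tilde u)$ is locally Lipschitz; (ii) $H=\eta^{\rm T}R\eta$ is continuous and, for a properly defined HOCBF, nonvanishing on $\mathcal D$, hence bounded away from zero on compacta, so $H^{-1}$ is locally Lipschitz; (iii) the pointwise maximum of two locally Lipschitz functions is locally Lipschitz. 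Hence $\lambda$ is locally Lipschitz, and substituting into \eqref{quadratic optimal safe control} shows $u^*_{\text{safe}}$ is locally Lipschitz; in particular it is continuous across $\{{\rm C_s}=0\}$ because $\lambda\to0$ there. Combining this with Step 1 and the HOCBF forward-invariance theorem of Section \ref{PRELIMINARIES} gives that $u^*_{\text{safe}}$ renders $\mathbf C$ forward invariant.

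\emph{Step 3: optimality within $K_{\text{cbf}}$.} Here I would run the usual verification argument. By construction $(V^*_{\text{safe}},u^*_{\text{safe}})$ satisfies the identity \eqref{GHJB}, and by \eqref{min HJB} every admissible safe control $u$ (admissible meaning the cost is finite and, by the Lyapunov-like hypothesis, $x(t)\to0$) satisfies $L_fV^*_{\text{safe}}(x)+L_gV^*_{\text{safe}}(x)u+\mathbf R(u)+\mathbf Q(x)\ge0$. Since the first two terms equal $\dot V^*_{\text{safe}}(x(t))$ along the closed loop, this reads $-\dot V^*_{\text{safe}}\le\mathbf Q(x)+\mathbf R(u)$; integrating from $0$ to $\infty$ and using $V^*_{\text{safe}}(0)=0$ with $x(t)\to0$ yields $V^*_{\text{safe}}(x_0)\le\int_0^\infty(\mathbf Q(x)+\mathbf R(u))\,dt$, with equality for $u=u^*_{\text{safe}}$ by \eqref{GHJB}. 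Therefore $u^*_{\text{safe}}$ attains the minimum of \eqref{Cost Functional} over all admissible controls in $K_{\text{cbf}}$.

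\emph{Step 4: asymptotic stability.} Substituting $u=u^*_{\text{safe}}$ into \eqref{GHJB} gives $\dot V^*_{\text{safe}}=-\mathbf Q(x)-\mathbf R(u^*_{\text{safe}})$, which is strictly negative for $x\ne0$ since $\mathbf Q(x)>0$ for $x\ne0$ and $\mathbf R\ge0$. Because $V^*_{\text{safe}}$ is, by hypothesis, positive definite with $V^*_{\text{safe}}(0)=0$, Lyapunov's direct method yields asymptotic stability of the origin of $\dot x=f(x)+g(x)u^*_{\text{safe}}$. I expect essentially all the difficulty to be concentrated in Step 2 --- pinning down that the switch occurs exactly where ${\rm C_s}(x,\tilde u)$ changes sign, so that $\lambda$ (and hence $u^*_{\text{safe}}$) is continuous there, and that $H$ stays uniformly bounded away from zero locally; Steps 1, 3 and 4 are routine once the $\max$-representation of $\lambda$ and the GHJB identity are in hand.
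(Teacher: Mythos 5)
Your proposal is correct and follows essentially the same route as the paper's proof: safety via the KKT/complementary-slackness conditions, local Lipschitz continuity by recognizing that $\lambda$ is $H^{-1}$ times a piecewise (max/min with zero) locally Lipschitz function of $x$ that vanishes on the switching surface $\{{\rm C_s}=0\}$, optimality from HJB sufficiency (your verification integration just spells this out), and asymptotic stability from $\dot V^*_{\text{safe}}=-\mathbf Q(x)-\mathbf R(u^*_{\text{safe}})<0$ with Lyapunov's direct method. The only difference is that you make explicit a few steps the paper asserts tersely (the active-constraint computation ${\rm C_s}(x,u^*_{\text{safe}})={\rm C_s}(x,\tilde u)+H\lambda$ and the integration argument), which is consistent with, not divergent from, the paper's argument.
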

\begin{proof}
The first part of the Theorem, $u^*_{\text{safe}} \in K_{\text{cbf}}$, follows directly from definitions and theorems of CBFs, solving the minimization problem and satisfying the KKT optimality conditions \eqref{KKT conditions}. 
Optimality of the controller is inferred right from the sufficiency of the HJB equation which we used to get the controller equation in the derivations above. Then, local Lipchitz continuity of the controller comes from the smoothness assumption of the optimal cost-to-go $V_{\text{safe}}^*(x)$ and local Lipschitz continuity of the functions involved in the safety constraint. Specifically, given that $V^*_{\text{safe}}$ is continuously differential, $f, g, \alpha$ and $\dot{h}$ being locally Lipschitz continuous, then $h, \eta$ and $H$ are locally Lipschitz continuous. From equation \eqref{quadratic optimal safe control}, it is sufficient to show that $\lambda$ is locally Lipschitz continuous, implying Lipschitz continuity of $u^*_{\text{safe}}$.
\\
Clearly from the optimization problem \eqref{min HJB}, the optimal control is equivalent to that resulting from solving the unconstrained HJB equation as long as 
${\rm{C_s}}(u^*) = L_f^k h(x)  - \eta^{\rm{T}} L_g V^*_{\text{safe}}(x) + \alpha(\psi_{k-1}) \geq 0$.
Define 
$$ 
\lambda_1 = \begin{cases} L_f^k h(x)  - \eta^{\rm{T}} L_g V^*_{\text{safe}}(x) + \alpha(\psi_{k-1}) & , {\rm{C_s}}(u^*) < 0 \\
0 & , {\rm{C_s}}(u^*) \geq 0
\end{cases}
$$ 
By definition, $\lambda_1$ is locally Lipschitz continuous. Moreover, $H^{-1}$ is also locally Lipschitz continuous. Therefore, $\lambda=H^{-1} \lambda_1$ is locally Lipschitz continuous. Finally, by \eqref{quadratic optimal safe control}, $u^*_{\text{safe}}$ is locally Lipschitz continuous.

Finally, 
we shall show that origin of the closed loop system, under the control $u^*_{\text{safe}}$, is asymptotically stable:
\begin{align*}
    \begin{split}
\frac{{\rm{d}}V_{\text{safe}}^*}{{\rm{d}}t} & = L_f V_{\text{safe}}^*(x) + L_g V_{\text{safe}}^*(x) u^*_{\text{safe}}\\
                & =  L_f V_{\text{safe}}^*(x) - L_g V_{\text{safe}}^*(x) \bar{R} L_g V_{\text{safe}}^*(x))^{\rm{T}} \\
               & - L_g V_{\text{safe}}^*(x) \eta H^{-1} (L_f^k h +\alpha ) \\
                & + \frac{1}{2} u^{*T}_{\text{safe}} R^{-1} u^*_{\text{safe}} - \frac{1}{2} u^{*T}_{\text{safe}} R^{-1} u^*_{\text{safe}} \\ 
                & = - \mathbf{Q}(x) - \frac{1}{2} u^{*T}_{\text{safe}} R^{-1} u^*_{\text{safe}} \leq - \mathbf{Q}(x)  <0, \; \forall x\neq 0
   \end{split}
\end{align*}
where the last steps utilize the constrained HJB equation. Therefore, by Lyapunov stability theory \cite{khalil2002nonlinear}, the origin of $f(x)+g(x)u^*_{\text{safe}}(x)$ is asymptotically stable.
\end{proof}

It is worth noting that the analysis above assumes $\eta$ to be a vector but one may use $q$ safe constraints, or equivalently use $q$ functions to describe the safe set. Clearly, from Proposition \ref{Rbar Prop} and the discussions above, when there is no input constraint with a quadratic input penalization, it is possible that $\eta H^{-1} \eta^{\rm{T}}=R^{-1}$, i.e. $\bar{R}$ is a zero matrix. In fact, this is true for single input systems. This is also true for multi-input systems in some over constrained problems where the optimal solution cannot generate a minimizing solution but the one that satisfies the safety constraint.

In such cases, the optimal safe controller can be computed as $u^*_{\text{safe}}=- R^{-1} L_g V_{\text{safe}}^*(x)^{\rm{T}}$ as long as ${\rm{C_s}}\geq0$ and $u^*_{\text{safe}} = - \eta H^{-1} (L_f^k h(x) + \alpha)$ otherwise. 
This suggests that the optimal safe control is equivalent to the pointwise minimum norm CBF controller that minimizes the instantaneous intervention by the CBF condition mentioned in \cite{ames2019CBF-Theo&App}. Moreover, in such cases, if a solution exists, the corresponding constrained HJB equation will be $ 2V_x^* \bar{F} + \bar{Q} = 0$ and 
from \eqref{new fbar,Rbar and Qbar}, $\frac{\rm{d}V^*}{\rm{d}t}\leq-\frac{1}{2} \bar{Q}(x) < 0, \ \forall x \neq 0$, preserving asymptotic stability. Next, we synthesize optimal safe controls using a modified Galerkin successive approximation method.

\section{Optimal Safe Control via Safe Galerkin Successive Approximation (SGSA)} \label{OPTIMAL SAFE CONTROL VIA SAFE GALERKIN SUCCESSIVE APPROXIMATION (SGSA)}

The GSA method, 
successively approximates the optimal cost-to-go by iteratively solving a sequence of linear GHJB equations. It can be shown that the successively approximated solution converges to the solution of the HJB equation \cite{beard1998successive}. For a thorough discussion on the GSA method, the reader may refer to \cite{beard1998approximate,beard1998successive,fletcher1984computational}. Obviously, the algorithm needs to be modified and thus we propose a safe GSA (SGSA) to obtain $u^*_{\text{safe}}$.

Let $u^*_{\text{safe}}: \Omega \rightarrow \mathbb{R}^m$ be a controller that safely and asymptotically stabilize \eqref{control dynamical system} on the compact set $\Omega$. Additionally, let us approximate the solution to the optimal safe control problem as 
$V^*_{N}(x)=\sum_{j=1}^N \boldsymbol{c}_j \phi_j$ and thus $V_{xN}^*=\sum_{j=1}^N \boldsymbol{c}_j \frac{\partial \phi_j}{\partial x} = \nabla \Phi_N^{\rm{T}} \boldsymbol{c}_N$, where $\Phi_N$ is a vector of a complete set of basis functions of the domain of the GHJB equation, $\phi_j$ is the $j^{\text{th}}$ basis function in the vector $\Phi_N$, $\boldsymbol{c}_N$ is a vector of weighting coefficients, $\boldsymbol{c}_j$ is the $j^{\text{th}}$ weighting coefficient in $\boldsymbol{c}_N$ and $\nabla \Phi_N$ is the Jacobian of $\Phi_N$. 

Using the Galerkin's technique \cite{fletcher1984computational,beard1998successive,beard1998approximate}, the GHJB equation \eqref{GHJB} can be approximated as
\begin{equation*} 
    \int_{\Omega} \Big( \boldsymbol{c}^{\rm{T}}_N \nabla \Phi_N (f+g u^*_{\text{safe}})+ \mathbf{Q} + \frac{1}{2} u^{*\rm{T}}_{\text{safe}} R u^*_{\text{safe}} \Big) \Phi_N dx = 0 
    \vspace{-1.25mm}
\end{equation*}
\begin{equation*}
    \hspace{-6mm} \Rightarrow \ \boldsymbol{c}^{\rm{T}}_N \int_{\Omega} \nabla \Phi_N (f+g u^*_{\text{safe}}) \Phi_N dx = -\int_{\Omega} (\mathbf{Q} + \frac{1}{2} u^{*\rm{T}}_{\text{safe}} R u^*_{\text{safe}}) \Phi_N dx 
    \vspace{-1.25mm}
\end{equation*}
\begin{equation*}
\Rightarrow \Big(A_1 + A_2 (u^*_{\text{safe}}) \Big) \boldsymbol{c}_N = b_1 + \frac{1}{2}b_2 (u^*_{\text{safe}})
\end{equation*}
where
\begin{equation*}\begin{split}
& A_1= \int_{\Omega} \Phi_N f^{\rm{T}} \nabla \Phi_N^{\rm{T}} dx, \ \ A_2(u^*_{\text{safe}})= \int_{\Omega} \Phi_N u^{*\rm{T}}_{\text{safe}} g^{\rm{T}} \nabla \Phi_N^{\rm{T}} dx \\
& b_1= - \int_{\Omega} \mathbf{Q}(x) \Phi_N dx , \ \ \ \ b_2(u^*_{\text{safe}})= - \int_{\Omega} u^{*\rm{T}}_{\text{safe}} R u^*_{\text{safe}}  \Phi_N dx
\end{split}\end{equation*}
Notice that $A_1$ and $b_1$ need to be computed once, but since we want to successively approximate $u^*_{\text{safe}}$, $A_2(u^*_{\text{safe}})$ and $b_2(u^*_{\text{safe}})$ need to be recomputed at each run, which is inefficient. Nonetheless, luckily, using the gradient of $V^*_N$ in the controller equation helps us avoiding such a pitfall. Let $
\hat{u}^*_{\text{safe}_N}= -\Big(\bar{R} g^{\rm{T}} \nabla \Phi_N^{\rm{T}} \boldsymbol{c}_N + u_{\text{cbf}} \Big)
$
where $u_{\text{cbf}}=\eta H^{-1} (L_f^k h(x) + \alpha)$. Now, 
\begin{equation*} \begin{split}
    A_2(&\hat{u}^*_{\text{safe}_N})  =\int_{\Omega} \Phi_N \hat{u}^{* \rm{T}}_{\text{safe}_N} g^{\rm{T}} \nabla \Phi_N^{\rm{T}} dx  \\
    &= - \sum_{j=1}^N \boldsymbol{c}_j \int_{\Omega} \Phi_N \frac{\partial \phi_j}{\partial x}^{\rm{T}} g \bar{R} g^{\rm{T}} \nabla \Phi_N^{\rm{T}} dx \\
    & \ \ \ +\int_{\Omega} \Phi_N u^{\rm{T}}_{\text{cbf}} g^{\rm{T}} \nabla \Phi_n^{\rm{T}} dx = - \sum_{j=1}^N \boldsymbol{c}_j G^A_{1j} + G^A_{2}
\end{split} \end{equation*}
where $G^A_{1j}$ and $G^A_2$ are defined accordingly. Similarly,
\begin{equation*} \begin{split}
    b_2(& \hat{u}^*_{\text{safe}_N}) = - \int_{\Omega} \Phi_N \hat{u}^{* \rm{T}}_{\text{safe}_N} R \hat{u}^*_{\text{safe}_N} \ dx  \\
    &= - \sum_{j=1}^N \boldsymbol{c}_j  \Bigg( \int_{\Omega} \Phi_N  \frac{\partial \phi_j}{\partial x}^{\rm{T}} g \bar{R}R\bar{R} g^{\rm{T}} \nabla \Phi_N^{\rm{T}}  \ dx \ \boldsymbol{c}_N \\ \ \ \ & + \int_{\Omega} 2 \Phi_N \frac{\partial \phi_j}{\partial x}^{\rm{T}} g \bar{R} R u_{\text{cbf}} \ dx \Bigg) +  \int_{\Omega} \Phi_N u^{\rm{T}}_{\text{cbf}} R u_{\text{cbf}} \ dx \\ 
    &= - \sum_{j=1}^N \boldsymbol{c}_j (G^b_{1j} \boldsymbol{c}_N +  G^b_{2j} ) + G^b_3
\end{split} \end{equation*}
where $G^b_{1j}, G^b_{2j}$ and $G^b_3$ are the integrals defined accordingly. Now, we only need to compute the $G$'s defined above once. Finally, one can implement the SGSA algorithm in Algorithm \ref{SGSA algorithm} to compute $\hat{u}^*_{\text{safe}_N}$. It is worth mentioning that, as shown in \cite{beard1998approximate}, as $N \rightarrow \infty$, the GSA method is capable of providing the optimal solution. The difference between the GSA method and the proposed SGSA method is that we have more integrals to compute in the SGSA method, which are related to the safety constraints. This is an off-line technique, however. Additionally, it should be noted that in the GSA algorithm, the curse of dimensionality can be mitigated by removing redundancies in the integration functions as discussed in \cite{beard1998successive}.

\begin{algorithm} [h] 
\SetAlgoLined
 \textbf{Given:} System Dynamics $f,g$\;
 Cost parameters $\mathbf{Q(x)},R$\;
 Safety parameters $h,\alpha$\;
 Safe control $u_{\text{cbf}}$\;
 Complete basis functions vector $\Phi_N$\;
 Initial safe stabilizing control (on $\Omega$) $u_{\text{safe}}^{(0)}$\;
 \textbf{Compute:} $A_1,A_2(u_{\text{safe}}^{(0)}),b_1,b_2(u_{\text{safe}}^{(0)})$\;
 $\{G^A_{1j}\}_{j=1}^N, G^A_2,\{G^b_{1j}\}_{j=1}^N, \{G^b_{2j}\}_{j=1}^N, G^b_3$\;
 $\boldsymbol{A}^{(0)}=A_1+A_2(u_{\text{safe}}^{(0)})$\; 
 $\boldsymbol{b}^{(0)}=b_1+\frac{1}{2}b_2(u_{\text{safe}}^{(0)})$\;
 $\boldsymbol{c}_N^{(0)}=(A^{(0)})^{-1} b^{(0)}$ \;
 \For{$i=1$ to $\infty$}{
 $\boldsymbol{A}^{(i)}=A_1-\sum_{j=1}^N \boldsymbol{c}^{(i-1)}_j G^A_{1j} + G^A_{2} $\;
 $\boldsymbol{b}^{(i)}=b_1- \frac{1}{2} \sum_{j=1}^N \boldsymbol{c}^{(i-1)}_j (G^b_{1j} \boldsymbol{\boldsymbol{c}}^{(i-1)}_N +  G^b_{2j} ) +\frac{1}{2} G^b_3$\;
 $\boldsymbol{c}_N^{(i)}=(A^{(i)})^{-1} b^{(i)}$ \;
 }
 $V^*_{N}(x)=\sum_{j=1}^N \boldsymbol{c}_j \phi_j$ \;
 $\hat{u}^*_{\text{safe}_N}= -\Big(\bar{R} g^{\rm{T}} \nabla \Phi_N^{\rm{T}} \boldsymbol{c}_N + u_{\text{cbf}} \Big)$\;
\caption{SGSA for the Optimal Safe Control} \label{SGSA algorithm}
\end{algorithm}

\section{Algorithm Implementation and Examples} \label{ALGORITHM IMPLEMENTATION and EXAMPLES}
In this section, we implement the proposed algorithm to find the optimal safe control. A multi-input nonlinear system is picked to compare the proposed optimal safe controller with the minimum norm controller resulting from filtering the unconstrained optimal control by the CBF safety constraint. 
The multi-input nonlinear system is given by
\begin{equation} \label{linear system dyn - example}
    \dot{x}=\begin{bmatrix} \sin(x_2)+2 x_1+ u_1 +0.5 u_2 \\ 0.5x_1^3+x_2-u_2\end{bmatrix}
\end{equation}
The optimal control problem's parameters are $\mathbf{Q}(x)=50 x^{\rm{T}} x$ and $\mathbf{R}(u)=u^{\rm{T}} u$ and the safety constraint is defined by $h(x)=(x_1-0.75)^2 + (x_2+0.6)^2 -0.25^2$ and $\alpha=20h(x)$. 
Using Algorithm \ref{SGSA algorithm}, with $N=25$ and $\Phi_{25}^{\rm{T}} = \big[ x_1^2, \ \sqrt{2} x_1 x_2, \ x_2^2, \ x_1^3, \ \sqrt{3} x_1^2 x_2, \ \sqrt{3} x_1 x_2^2, \ x_2^3, \
 x_1^4, \ 2 x_1^3 x_2$, $\sqrt{6} x_1^2 x_2^2, \ 2 x_1 x_2^3, \ x_2^4, \   x_1^5, \ \sqrt{5} x_1^4 x_2, \
 \sqrt{10} x_1^3 x_2^2, \  \sqrt{10} x_1^2 x_2^3$, $ \sqrt{5} x_1 x_2^4, \ x_2^5, \ x_1^6, \ \sqrt{6} x_1^5 x_2, \
 \sqrt{15} x_1^4 x_2^2, \ 2 \sqrt{5} x_1^3 x_2^3, \ \sqrt{15} x_1^2 x_2^4$, $\sqrt{6} x_1 x_2^5, \ x_2^6 \big] $,
the constrained solution's coefficients vector is computed to be
$\mathbf{c}_{25}^{\rm{T}}=\big[ 4.68, \ 1.78, \ 5.43, \ 0.26, \ -0.47, \ 0.44, \ -0.10, \ 0.10, \ -0.05$, $\ 0.24, \ -0.07, \ -0.26, \ -0.05, \ 0.02, \ 0.01, \ 0, \ -0.06$, $ -0.03, 0, \ 0.02, \ -0.01, \ -0.01, \ -0.01, \ 0.02, \ 0.03 \big]$. 
As the SGSA nonlinear controller is $5^{\text{th}}$ order, we use a $5^{\text{th}}$ order nonlinear quadratic regulator (NLQR) developed in \cite{AlmubarakSadeghandTaylor2019} as the optimal controller used with the min-norm solution and the initial controller for the SGSA to provide a fair comparison. Some of the obtained results for different initial conditions are shown in Table \ref{tab:comparison}. Additionally, Fig. \ref{random example1 - different ic} shows how the SGSA solution finds the optimal safe path which is not necessarily the min-norm that minimizes the CBF instantaneous intervention. Clearly, the proposed SGSA solution successfully solves the safety critical problem effectively and efficiently and outweighs the min-norm solution.
      \begin{table}[h]
    \centering
    \begin{tabular}{|c|c|c|c||}
    \hline\hline
        Initial condition & NLQR (unconstrained) \cite{AlmubarakSadeghandTaylor2019} & SGSA & Min.Norm \\
        \hline
        $(1,-0.8)$  & $10.23$    &  $18.61$  &  $28.85$   \\
        \hline
        $(1.45,-1.3)$  & $23.27$    &  $31.90$  &  $43.22$    \\
        \hline
        $(1.6,-1.4)$  & $27.62$    &  $44.94$  &  $55.65$    \\
        \hline\hline
    \end{tabular}
    \caption{Costs of different solutions shown in Fig. \ref{random example1 - different ic}}
     \vspace{-2mm}  
    \label{tab:comparison}
\end{table}

    \begin{figure}[]
        \centering
    \includegraphics[width=3in]{}
     \hspace*{-0.7cm} \subfloat{\includegraphics[trim=100 0 0 0, clip, width=1.2\linewidth]{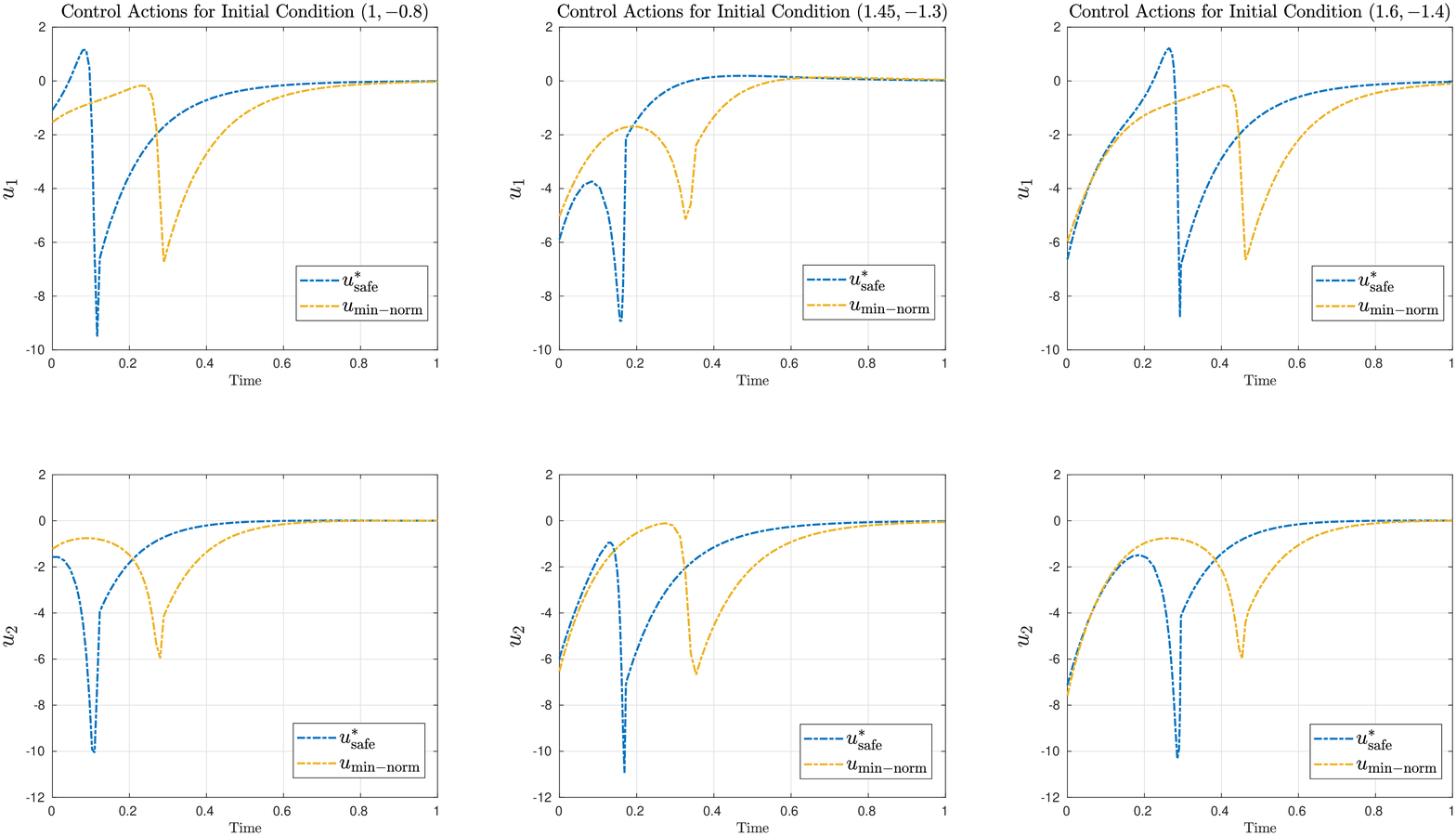}}
    \caption{Closed loop responses of the control system \eqref{linear system dyn - example} under the control of a $5^{\text{th}}$ order NLQR filtered by a min-norm CBF filter (yellow) and the proposed SGSA control (blue) with different initial conditions. The red circle represents the unsafe region. The proposed method safely stabilizes the system and minimizes the cost as shown in Table \ref{tab:comparison}. 
      In addition, it can be seen that the SGSA controller reacts earlier to minimize the cumulative cost and avoid the unsafe region whereas the min-norm controller sticks to the nominal control, NLQR, which results in a sub-optimal behavior.}
       \vspace{-2mm}  
      \label{random example1 - different ic}
      \label{stabilization example}
   \end{figure}

\section{Conclusions and Future Work}  \label{CONCLUSIONS}
We presented an optimal safe control problem for safety-critical control systems that need to be efficiently regulated to minimize a given cost integral while ensuring safety. The proposed work utilized control barrier functions to enforce safety which was used to constraint the solution of the HJB equation. A CBF certified optimal controller was provided. We showed that the proposed controller belongs to the set of controls that renders the system safe and is the one that minimizes the given cost functional. To solve the constrained HJB equation and synthesize the optimal safe control law, a modified Galerkin successive approximation (GSA) algorithm was proposed. The off-line algorithm follows the GSA presented in \cite{beard1998successive,beard1998approximate} considering the CBF certified control which resulted in more integrals to be computed. The algorithm was implemented on a multi-input nonlinear system to show its efficacy.

Future directions include extending the work to the min-max problem to compute a robust optimal safe control and improving the efficiency and scalability of the algorithm through polynomial approximation and tensor decomposition.

\printbibliography

\end{document}